
\documentclass[12pt]{iopart}
\usepackage{iopams}

\usepackage{latexsym,amssymb}
\usepackage[english]{babel}
\usepackage{verbatim}
\usepackage{color}

\usepackage{amsthm}

\newcommand{\eqref}[1]{(\ref{#1})}

\newtheorem{lemma}{Lemma}
\newtheorem{remark}{Remark}

\newcommand{\ket}[1]{ | #1 \rangle}
\newcommand{\bra}[1]{ \langle #1 |}

\newcommand{\la}{\lambda}

\def\one{\leavevmode\hbox{\small1\normalsize\kern-.33em1}}

\newcommand{\beq}{\begin{equation}}
\newcommand{\eeq}{\end{equation}}
\newcommand{\bea}{\begin{eqnarray}}
\newcommand{\eea}{\end{eqnarray}}

 \def\lam{\lambda} 
   
\def\rA{\rm A}  \def\rB{\rm B}
\def\bfb{\mathbf b}

\newcommand{\opsi}{\bar{\psi}}

\newcommand{\nn}{\nonumber}

\newtheorem{theorem}{Theorem}
\newtheorem{proposition}{Proposition}
\newtheorem{corollary}{Corrolary}

\begin{document}

\title{Quantum weighted entropy and its properties}


\author{Y.~Suhov$^1$ and S.~Zohren$^2$}

\address{
$^1$ Department of Mathematics, Penn State University, PA 16802, USA; \\ 
DPMMS, University of Cambridge, CB30WB, UK; \\ 
Institute for Information Transmission Problems, RAS, 127994 Moscow, Russia \\
$^2$ Department of Physics, Pontifica Universidade Cat\'olica, Rio de Janeiro, Brazil
}


\pacs{03.67.-a, 89.70.Cf \\
MSC numbers: 46N50, 94A15, 47L90 \\
Keywords: Quantum information theory, weighted entropy, trace inequalities 
}

\begin{abstract}
We introduce quantum weighted entropy in analogy to an earlier notion of (classical) weighted entropy and derive many of its properties. These include the subadditivity, concavity and strong subadditivity property of quantum weighted entropy, as well as an analog of the Araki-Lieb inequality. Interesting byproducts of the proofs are a weighted analog of Klein's inequality and non-negativity of quantum weighted relative entropy. A main difficulty is the fact that the weights in general do not commute with the density matrices.
\end{abstract}

\maketitle


\section{Introduction}
Shannon entropy and its quantum analog, von Neumann entropy, play essential roles in classical and quantum information theory \cite{CT,Information,Nielsen}. There are many generalisations of Shannon entropy, such as R\'enyi entropy, which have been proposed both in the classical as well as quantum case. Another interesting generalisation is (classical) weighted entropy \cite{BG,G}. The idea behind weighted entropy is the incorporation of further characteristics of each event through a weight assigned to it in addition to its probability. Weighted entropy has been used at several places in the information theory and computer science literature (see for instance \cite{ShMM,SiB,K,SrV,S} and references therein) including in machine learning applications \cite{MMN}. However, the natural quantum analog of weighted entropy has apparently not yet been considered in the quantum information theory literature.

The aim of this letter is to introduce \emph{quantum weighted entropy} and to prove several basic properties, including subadditivity, concavity and strong subadditivity of quantum weighted entropy and an analog of the Araki-Lieb inequality. Many of the corresponding trace inequalities for von Neumann entropy have important applications in quantum information theory. As an example let us mention strong subadditivity of standard von Neumann entropy which was conjectured in \cite{LR} and then proven by Lieb and Ruskai \cite{LR1,LR2} (see also \cite{NP} for a modern simplified proof). Amongst many of its applications is the thermodynamic limit of entropy per volume which was already considered in the classical case \cite{RR}. We refer the reader to \cite{Ruskai2002} for a review. Some of the above trace inequalities are quantum analogs of a series of inequalities for classical weighted entropy recently considered in \cite{SY}.

Let us now give a formal definition of quantum weighted entropy. Consider a quantum mechanical system with Hilbert space $\mathcal{H}$ and a density matrix $\rho$ on $\mathcal{H}$. For an Hermitian, positive definite matrix $\phi$ on $\mathcal{H}$, which from here on we simply refer to as \emph{weight}, we define the \emph{quantum weighted entropy} as follows
\beq \label{def1}
S_\phi(\rho)=- \tr (\phi\rho\log\rho).
\eeq
One sees that for $\phi=1_{\mathcal{H}}$, where $1_{\mathcal{H}}$ is the identity matrix on $\mathcal{H}$, the quantum wighted entropy reduces to the standard von Neumann entropy. 

Before moving to the discussion of different properties and their proofs in the next sections let us first comment on potential difficulties. Consider for the moment the well-known Gibbs inequality which yields positivity of von Neumann entropy whose weighted analog we will discuss in the next section. The main difficulty in the proof, as compared to the corresponding classical result for Shannon entropy, is the fact that the different (reduced) density matrices in general do not commute. Similarly, when extending many trace inequalities for von Neumann entropy to quantum weighted entropy the difficulty lies in the fact that now also the weight $\phi$ is not commuting with the density matrices.

\section{Quantum weighted relative entropy and Gibbs inequality}

Extending standard classical notions, we can also introduce the \emph{quantum weighted relative entropy} or weighted Kullback-Leibler divergence as
\beq \label{def2}
D_\phi(\rho \|\sigma  )= \tr (\phi\rho\log\rho)-\tr (\phi\rho\log\sigma).
\eeq
Here and below $\rho$ is a density matrix and $\sigma$ is positive definite in $\mathcal{H}$.

An important property of quantum weighted relative entropy is given by the weighted Klein's inequality
\begin{lemma}[Weighted Klein's inequality] \label{Klein}
Assume that $X, Y, W$ are Hermitian positive definite matrices on a Hilbert space $\mathcal{H}$. Then if $f$ is a convex function one has
\beq
\tr \left(W(f(Y) - f(X))\right) \geq \tr \left( W(Y - X)f'(X)\right) .
\eeq 
In particular for $f(x)=x\log x$ one has
\beq\label{Kleineq2}
\tr \left(W Y (\log Y - \log X) \right) \geq \tr \left( W(Y - X) \right) .
\eeq 
\end{lemma}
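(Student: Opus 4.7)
The strategy is to adapt the standard proof of the unweighted Klein's inequality---where one diagonalises both $X$ and $Y$, expands the trace in the two eigenbases, and invokes the scalar tangent-line inequality $f(y_j) - f(x_i) \geq (y_j - x_i) f'(x_i)$---modifying it to carry the weight $W$ through.

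Concretely, I would write the spectral decompositions $X = \sum_i x_i P_i$ and $Y = \sum_j y_j Q_j$ with rank-one projectors $P_i = |u_i\rangle\langle u_i|$ and $Q_j = |v_j\rangle\langle v_j|$, so that $f(X) = \sum_i f(x_i) P_i$ and $f'(X) = \sum_i f'(x_i) P_i$ are both diagonal in the $X$-eigenbasis. Inserting the resolutions $\sum_i P_i = \sum_j Q_j = I_{\mathcal H}$ into each of the traces $\tr(Wf(Y))$, $\tr(Wf(X))$, $\tr(WYf'(X))$, $\tr(WXf'(X))$, using cyclicity, and taking the Hermitian (real) part of the right-hand side---necessary because $\tr(W(Y-X)f'(X))$ is in general complex when $W$ fails to commute with $X$---the desired difference rearranges into
\[
\sum_{i,j} c_{ij}\bigl[f(y_j) - f(x_i) - (y_j - x_i) f'(x_i)\bigr], \qquad c_{ij} := \tfrac12\tr\bigl(W\{P_i, Q_j\}\bigr) \in \mathbb R,
\]
and each bracket is $\geq 0$ by the scalar tangent-line inequality.

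The main obstacle is the sign of the coefficients $c_{ij}$. In the unweighted case $W = I_{\mathcal H}$ they collapse to $|\langle u_i|v_j\rangle|^2 \geq 0$ and the sum is nonnegative term by term. For general positive definite $W$ the $c_{ij}$ can be negative; the only manifest positivity available lies in the marginals $\sum_j c_{ij} = \langle u_i|W|u_i\rangle \geq 0$ and $\sum_i c_{ij} = \langle v_j|W|v_j\rangle \geq 0$. The plan to close this gap is either (i) to upgrade the claim to the operator inequality
\[
f(Y) - f(X) - \tfrac12\{Y-X,\,f'(X)\} \geq 0,
\]
which when traced against $W \geq 0$ yields the weighted inequality (this operator version is expected to hold at least when $f$ is operator convex, which is exactly the crucial case $f(x) = x\log x$); or (ii) to exploit the Bregman-type structure of $f(y_j) - f(x_i) - (y_j - x_i)f'(x_i)$ together with positive definiteness of $W$ via an integral representation or purification argument.

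For the displayed consequence (\ref{Kleineq2}) one specialises $f(x) = x\log x$, $f'(x) = \log x + 1$: substituting into the general inequality, the constant $+1$ piece of $f'$ contributes $\tr(W(Y-X))$ on both sides (cancelling one copy) and the $\log X$ piece recombines with $\tr(WY\log Y)$ on the left to give $\tr(WY(\log Y - \log X)) \geq \tr(W(Y-X))$, as stated.
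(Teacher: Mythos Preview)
Your double-eigenbasis expansion is the natural first thing to try, and you correctly isolate the real obstruction: the mixed coefficients $c_{ij}=\tfrac12\tr\bigl(W\{P_i,Q_j\}\bigr)$ need not be nonnegative, so the termwise tangent-line bound does not go through. The difficulty is that neither of your proposed remedies actually closes this gap. Plan~(i) asks for the operator inequality $f(Y)-f(X)-\tfrac12\{Y-X,f'(X)\}\ge 0$; but for operator-convex $f$ the tangent inequality that \emph{is} known replaces the anticommutator by the Fr\'echet derivative $Df(X)[Y-X]$ (built from divided differences in the spectral projections of $X$), which is a different operator, and the anticommutator version fails in general. Plan~(ii) is too vague to count as an argument. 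So as written the proposal does not prove the lemma.

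The paper sidesteps the sign problem by a different decomposition. Instead of expanding in both eigenbases, it factors the weight as $W=LL^{\dagger}$ and expands \emph{only} in the eigenbasis $\{\ket{e_j}\}$ of $X$. Writing $\ket{\tilde e_j}=L\ket{e_j}/\sqrt{\langle e_j|W|e_j\rangle}$ (a unit vector, though the $\ket{\tilde e_j}$ are not orthogonal), one has
\[
\tr\bigl(W(f(Y)-f(X))\bigr)=\sum_j \langle e_j|W|e_j\rangle\Bigl(\langle\tilde e_j|f(Y)|\tilde e_j\rangle - f(\lambda_j)\Bigr),
\]
with the manifestly nonnegative weights $\langle e_j|W|e_j\rangle$ replacing your $c_{ij}$. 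Then Jensen's inequality for unit vectors, $\langle v|f(Y)|v\rangle\ge f(\langle v|Y|v\rangle)$, followed by the scalar tangent-line bound $f(y)-f(x)\ge(y-x)f'(x)$ with $y=\langle\tilde e_j|Y|\tilde e_j\rangle$ and $x=\lambda_j$, yields the claim after reassembling the trace. The key idea you are missing is precisely this single-basis expansion with the ``square-root'' splitting of $W$, which converts the weight into a deformation of the vectors rather than a potentially signed coefficient matrix.
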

The proof is given in the appendix. At this point we only note that a slight difficulty in comparison to the standard Klein's inequality is caused by the fact that $W$ in general does not commute with $X$ and $Y$. An important result which can be easily derived from the weighted Klein's inequality is the weighted Gibbs  inequality. 

\begin{theorem}[Weighted Gibbs inequality] \label{Gibbs}
Under the condition $\tr \,\phi\,\rho\geq \tr \,\phi\,\sigma$ one has
\beq
D_\phi(\rho \|\sigma  )\geq 0, 
\eeq
with equality if and only if $\rho =\sigma$.
\end{theorem}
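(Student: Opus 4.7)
The plan is to obtain the weighted Gibbs inequality as a direct corollary of the weighted Klein's inequality (Lemma~\ref{Klein}) specialized to $f(x)=x\log x$. Concretely, I would substitute $W=\phi$, $Y=\rho$, $X=\sigma$ into inequality \rf{Kleineq2}, giving
\beq
\tr\!\left(\phi\,\rho(\log\rho-\log\sigma)\right)\;\geq\;\tr\!\left(\phi(\rho-\sigma)\right).
\eeq
By the definition \rf{def2}, the left-hand side is exactly $D_\phi(\rho\|\sigma)$, while the right-hand side equals $\tr\,\phi\,\rho-\tr\,\phi\,\sigma$. The hypothesis $\tr\,\phi\,\rho\geq\tr\,\phi\,\sigma$ ensures the right-hand side is non-negative, so chaining the two inequalities yields $D_\phi(\rho\|\sigma)\geq 0$, which is the first assertion.

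For the equality statement, one direction is immediate: if $\rho=\sigma$ then $D_\phi(\rho\|\sigma)=0$ by inspection. For the converse, the plan is to argue that $D_\phi(\rho\|\sigma)=0$ forces equality in the weighted Klein inequality applied to the strictly convex function $f(x)=x\log x$. Since $\phi$ is Hermitian positive definite (hence invertible) and $x\log x$ is strictly convex on $(0,\infty)$, equality in Lemma~\ref{Klein} should imply $\rho=\sigma$. Concretely, I would either appeal to the equality case of the integral/spectral representation used to prove Lemma~\ref{Klein} in the appendix, or write $f(Y)-f(X)-(Y-X)f'(X)$ in a diagonal basis of $X$ and use that its matrix elements are controlled by the strictly positive ``divided-difference'' quantities $f(y)-f(x)-(y-x)f'(x)>0$ for $y\neq x$.

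The main obstacle I anticipate is precisely this equality characterization. The non-strict inequality follows in one line from Lemma~\ref{Klein}, but the ``if and only if'' clause is not made explicit in the lemma's statement, so one has to re-examine the proof of the weighted Klein's inequality to confirm that strict convexity of $f$, together with positive-definiteness of $\phi$, is enough to rule out any equality except $X=Y$. The non-commutativity between $\phi$ and $\rho,\sigma$ — the recurring technical difficulty flagged in the introduction — is what makes this step delicate: one cannot simply diagonalize everything simultaneously, and has to be careful that the trace with $\phi$ does not collapse strict inequalities at the matrix-entry level into an equality at the trace level. Once the equality case of Lemma~\ref{Klein} is pinned down, the theorem follows with no further work.
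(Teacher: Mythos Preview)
Your proposal is correct and matches the paper's own proof essentially verbatim: the paper likewise applies Lemma~\ref{Klein} with $X=\sigma$, $Y=\rho$, $W=\phi$ and $f(x)=x\log x$, then invokes the hypothesis $\tr\,\phi\,\rho\geq\tr\,\phi\,\sigma$ to conclude. Your additional discussion of the equality case is in fact more careful than the paper, which simply asserts it follows from the lemma without elaboration.
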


\begin{proof}
The proof of the weighted Gibbs inequality follows directly from the weighted Klein's inequality, Lemma \ref{Klein}. In particular for $X=\sigma$, $Y=\rho$, $W=\phi$ and $f(x)=x\log x$, the result immediately follows form \eqref{Kleineq2} under the condition $\tr \,\phi\,\rho\geq \tr \,\phi\,\sigma$.
\end{proof}

\begin{remark} Note that the condition $\tr \,\phi\,\rho\geq \tr \,\phi\,\sigma$ is physical and a minimum necessary requirement for the weighted Gibbs inequality. As expected, it is automatically satisfied if $\phi=1_{\mathcal{H}}$, i.e.\ in the case of standard von Neumann entropy.
\end{remark}

\section{Basic properties of quantum weighted entropy}

We now discuss three propositions with useful basic properties of the quantum weighted entropy. 

\begin{proposition} Denote by $\ket{e_1},\ldots ,\ket{e_d}$ the normalised eigenvectors of $\rho$ and by $\lam_1,\ldots ,\lam_d$ the corresponding eigenvalues. 
\begin{enumerate}
\item The quantum weighted entropy $S_\phi (\rho )$ is non-negative and zero if and only if either (i) $\rho$ is pure or (ii) $\langle e_i|\phi| e_i\rangle=0$ whenever $0<\lam_i<1$. 
\item $S_\phi (\rho )=S_{\phi'} (\rho )$ if $\langle e_i|\phi| e_i\rangle
=\langle e_i|\phi'|e_i\rangle$ whenever $0<\lam_i<1$. In this case we say that $\phi$ and $\phi'$ are $\rho$-conjugate.
\end{enumerate}
\end{proposition}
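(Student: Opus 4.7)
The plan is to reduce everything to a one‑line spectral calculation by expanding in the eigenbasis of $\rho$. Writing $\rho=\sum_i\lam_i\ket{e_i}\bra{e_i}$ (orthonormal basis since $\rho$ is Hermitian), one has $\rho\log\rho=\sum_i\lam_i\log\lam_i\,\ket{e_i}\bra{e_i}$ (with the convention $0\log 0=0$), and therefore
\beq \label{spectral}
S_\phi(\rho)=-\tr(\phi\rho\log\rho)=-\sum_{i=1}^d \lam_i\log\lam_i\,\langle e_i|\phi|e_i\rangle.
\eeq
This is the identity I would set up first; both assertions then follow by inspection of the sign and vanishing of each summand.

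For part~(1), I would use that $\lam_i\log\lam_i\le 0$ for $\lam_i\in[0,1]$, together with $\langle e_i|\phi|e_i\rangle\ge 0$ coming from positivity of $\phi$. Hence every term in \eqref{spectral} is non‑negative, giving $S_\phi(\rho)\ge 0$. For equality, since a sum of non‑negative numbers vanishes iff each one vanishes, I need each $i$ to satisfy $\lam_i\log\lam_i=0$ (i.e.\ $\lam_i\in\{0,1\}$) or $\langle e_i|\phi|e_i\rangle=0$. The first possibility for \emph{all} $i$ (combined with $\sum_i\lam_i=1$) is exactly the statement that $\rho$ is pure; the second possibility covers precisely the case that $\langle e_i|\phi|e_i\rangle=0$ for every $i$ with $0<\lam_i<1$. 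This gives exactly the dichotomy (i)/(ii) in the statement.

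For part~(2), the same identity \eqref{spectral} makes the claim immediate: the terms with $\lam_i\in\{0,1\}$ do not contribute, so only the diagonal entries $\langle e_i|\phi|e_i\rangle$ with $0<\lam_i<1$ enter the sum; if these agree with the corresponding $\langle e_i|\phi'|e_i\rangle$, then $S_\phi(\rho)=S_{\phi'}(\rho)$, justifying the terminology ``$\rho$-conjugate''.

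There is no substantive obstacle here: unlike later proofs (Klein, Gibbs, subadditivity) where non‑commutativity of $\phi$ with $\rho$ and $\sigma$ is genuinely at play, the operator $\phi\rho\log\rho$ has $\rho\log\rho$ already diagonal in the chosen basis, so only the diagonal matrix elements $\langle e_i|\phi|e_i\rangle$ ever appear and the non‑commutativity is irrelevant. The only minor point worth being careful about is the convention $0\log 0=0$ and the fact that ``positive definite'' is being used loosely (the diagonal entries $\langle e_i|\phi|e_i\rangle$ are allowed to vanish so that condition (ii) is non‑vacuous); both are standard and require no extra argument.
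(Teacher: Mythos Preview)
Your proof is correct and follows essentially the same route as the paper: both derive the spectral formula $S_\phi(\rho)=-\sum_i\langle e_i|\phi|e_i\rangle\,\lam_i\log\lam_i$ from the eigendecomposition of $\rho$ and then read off non-negativity, the equality cases, and part~(2) directly from it. Your discussion of the equality condition is slightly more explicit than the paper's, but the argument is the same.
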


\begin{proof}
The results of this proposition are a consequence of the eigenvalue decomposition of $\rho$ which yields
\beq\label{Srhoeigen}
S_\phi (\rho )=-\sum_i\langle e_i|\phi|e_i\rangle\lam_i\log\,\lam_i.
\eeq 
We directly see from this that $S_\phi (\rho )$ is non-negative. If $\rho$ is pure $\lam_i\log\,\lam_i=0, \forall i$ which implies $S_\phi (\rho )=0$. Otherwise, if $\rho$ is not pure one has $\lam_i<1, \forall i$ and it is clear from the above that in this case $S_\phi (\rho )=0$ if and only if $\langle e_i|\phi| e_i\rangle=0$ whenever $0<\lam_i<1$. This proves the first part of the proposition. The second part of the proposition follows directly from \eqref{Srhoeigen}.
\end{proof}

Above we have already shown that $S_\phi (\rho )\geq0$. The following proposition gives an upper bound on the quantum weighted entropy.

\begin{proposition} Suppose the rank of $\phi$ equals $m\leq d$ and let $P= P_\phi$ be the orthoprojection to the range of $\phi$. If $\tr \phi\rho\geq{\rm{tr}}\,\phi/m$ then
\beq
S_{\phi}(\rho )\leq S_{\phi}(P/m )=\big(\log\,m\big)\,{\rm{tr}}\,\phi 
\eeq
with equality if and only if $\rho = P/m$.
\end{proposition}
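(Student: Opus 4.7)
The plan is to invoke the weighted Klein's inequality (Lemma \ref{Klein}) in the form \eqref{Kleineq2}, applied with $Y=\rho$, $X=P/m$, and $W=\phi$. Since $P/m$ has a non-trivial kernel whenever $m<d$, I would first regularize it by the strictly positive matrix $X_\epsilon = P/m + \epsilon Q$, where $Q = 1_{\mathcal{H}} - P$, apply \eqref{Kleineq2} with this $X_\epsilon$, and then let $\epsilon \to 0$.

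The key algebraic facts I would exploit are $\phi P = P\phi = \phi$ (because $P$ is the orthoprojection onto the range of $\phi$) and $\phi Q = Q\phi = 0$. These yield $\tr(\phi X_\epsilon) = \tr(\phi)/m$ together with the spectral identity $\log X_\epsilon = -(\log m)P + (\log\epsilon)Q$, from which $\tr(\phi\rho\,\log X_\epsilon) = -(\log m)\tr(\phi\rho)$; the $\log\epsilon$ contribution is annihilated by $\tr(\phi\rho Q) = \tr(Q\phi\rho) = 0$. In particular $\epsilon$ drops out of the estimate before any limit is taken, which sidesteps convergence subtleties.

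Substituting these expressions into \eqref{Kleineq2} and rearranging gives the intermediate bound
\[ S_\phi(\rho) \leq (\log m)\,\tr(\phi\rho) - \tr(\phi\rho) + \tr(\phi)/m. \]
The hypothesis $\tr(\phi\rho) \geq \tr(\phi)/m$ is precisely what is needed to collapse the right-hand side to $S_\phi(P/m)$; the explicit value of $S_\phi(P/m)$ quoted in the proposition is then obtained from a short spectral computation using that $P/m$ has eigenvalue $1/m$ on the range of $\phi$ and $0$ on its kernel, with the convention $0\log 0 = 0$. The equality case follows by tracking the equality case of Klein's inequality, which enforces $\rho = P/m$.

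The main obstacle I anticipate is the non-invertibility of $P/m$; the regularization trick disposes of it cleanly because $\phi$ annihilates the range of $Q$, so the would-be divergent $\log\epsilon$ coefficient has trace zero. A minor bookkeeping subtlety is that $\phi$ commutes with neither $\rho$ nor $P/m$, but this only enters through routine uses of the cyclicity of the trace.
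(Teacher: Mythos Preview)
Your approach is essentially the same as the paper's, which simply invokes the weighted Gibbs inequality (Theorem~\ref{Gibbs}) with $\sigma = P/m$; since that theorem is itself an immediate corollary of the weighted Klein inequality, your direct appeal to Lemma~\ref{Klein} is the same argument unpacked one level. Your explicit regularization $X_\epsilon = P/m + \epsilon Q$ is in fact a welcome refinement, as it handles a technicality the paper glosses over: Theorem~\ref{Gibbs} is stated for strictly positive $\sigma$, whereas $P/m$ has non-trivial kernel when $m<d$.
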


\begin{proof}
The proposition is a direct consequence of the quantum weighted Gibbs inequality, i.e.\ Theorem \ref{Gibbs}, with $\sigma=P/m$.
\end{proof}

Let us now discuss properties of the weighted entropy with respect to the operations of
purification and partial traces (see \cite{Nielsen} for a standard textbook account). Purifications are based on the Schmidt decomposition of pure states which itself is a consequence of the singular value decomposition of complex matrices. Let us quickly recall the concept of purifications. If $\rho_A$ is a density matrix on $A$, then there exists a reference system $R$ and a pure state $\ket{\chi}$ on $AR$ such that $\rho_A=\tr_R \ket{\chi}\bra{\chi}$. Denote also $\rho_R=\tr_A \ket{\chi}\bra{\chi}$. Standard arguments show that $\rho_A$ and $\rho_R$ have the same collection $\{\lam_i\}$ of non-negative eigenvalues. Furthermore,  if we denote by $\{\ket{e^{A}_i}\}$ and $\{\ket{e^{R}_i}\}$ the corresponding eigenvectors of $\rho_A$ and $\rho_R$, then one finds
\bea
S_{\phi_1}(\rho_A)&=&-\sum\limits_i\langle e^{A}_i|\phi_1| e^{A}_i\rangle\lam_i\log\,\lam_i , \\
S_{\phi_2}(\rho_R)&=&-\sum\limits_i\langle e^{R}_i|\phi_2| e^{R}_i\rangle\lam_i\log\,\lam_i .
\eea
This proves the following proposition:
\begin{proposition}
Let $\ket{\chi}$ be a pure state  on $AR$ with $\rho_A=\tr_R \ket{\chi}\bra{\chi}$
and $\rho_R=\tr_A\ket{\chi}\bra{\chi}$, then for any pair of weight matrices $\phi_A$ on $A$ and $\phi_R$ on $R$ such that
\beq
\langle e^{A}_i|\phi_A| e^{A}_i\rangle =\langle e^{R}_i|\phi_R| e^{R}_i\rangle \quad \hbox{$\forall\;\;i$ with }\;0<\lam_i<1 \nn
\eeq
one has 
\beq
S_{\phi_1}(\rho_A)= S_{\phi_R}(\rho_R).\nn
\eeq
In this case we say that $\phi_A$ is $(\rho_A,\rho_R)$-conjugated to $\phi_R$ and 
$\phi_R$ is $(\rho_R,\rho_A)$-conjugated to $\phi_A$.
\end{proposition}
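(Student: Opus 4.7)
The strategy is to reduce everything to the spectral formula \eqref{Srhoeigen} that was established in the first proposition. The starting point is the Schmidt decomposition of the purifying state,
\[
\ket{\chi}=\sum_i\sqrt{\lam_i}\,\ket{e^A_i}\otimes\ket{e^R_i},
\]
from which tracing out $R$ and $A$ respectively yields the spectral decompositions $\rho_A=\sum_i\lam_i\ket{e^A_i}\bra{e^A_i}$ and $\rho_R=\sum_i\lam_i\ket{e^R_i}\bra{e^R_i}$. In particular the two reduced density matrices share the common eigenvalue multiset $\{\lam_i\}$, the fact that the text preceding the proposition already invokes.

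Applying \eqref{Srhoeigen} to each side then produces the two identities displayed just before the statement,
\[
S_{\phi_A}(\rho_A)=-\sum_i \langle e^A_i|\phi_A|e^A_i\rangle\lam_i\log\lam_i,\qquad
S_{\phi_R}(\rho_R)=-\sum_i \langle e^R_i|\phi_R|e^R_i\rangle\lam_i\log\lam_i.
\]
It then suffices to compare the two sums term by term. Indices with $\lam_i=0$ drop out by the usual convention $0\log 0=0$, and indices with $\lam_i=1$ drop out because $\log 1=0$, so only the indices with $0<\lam_i<1$ contribute to either sum. On exactly those indices the conjugation hypothesis $\langle e^A_i|\phi_A|e^A_i\rangle=\langle e^R_i|\phi_R|e^R_i\rangle$ forces the two summands to coincide, and hence $S_{\phi_A}(\rho_A)=S_{\phi_R}(\rho_R)$.

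There is no real obstacle. The only mild subtlety is that when a Schmidt coefficient $\sqrt{\lam_i}$ is degenerate the bases $\{\ket{e^A_i}\}$ and $\{\ket{e^R_i}\}$ are determined only up to a joint unitary on the degenerate block, so the conjugation condition should be read as matching diagonal matrix elements with respect to one consistent such pair of bases. Once that choice is fixed the argument above goes through verbatim, and an entirely symmetric reading justifies also the mirror terminology \emph{$\phi_R$ is $(\rho_R,\rho_A)$-conjugated to $\phi_A$}.
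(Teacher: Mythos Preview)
Your argument is correct and is essentially the same as the paper's: both use the Schmidt decomposition to identify the common eigenvalue set, apply the spectral formula \eqref{Srhoeigen} to each reduced state, and match the sums term by term via the conjugation hypothesis. Your additional remark about the degeneracy ambiguity in the Schmidt bases is a helpful clarification that the paper leaves implicit.
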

From this one has the following corollary:
\begin{corollary}\label{cor1}
Let $\rho$ be  a density matrix in $AB$, as well as
$\rho_A=\tr_B \rho$ and  $\rho_B=\tr_A\rho$.
Take a reference system $R$ with Hilbert space isomorphic to the Hilbert space of $AB$,
and a pure state $\ket{\chi}$ on $ABR$ such that $\rho =\tr_R\ket{\chi}\bra{\chi}$. Set $\rho_R= 
\tr_{AB} \ket{\chi}\bra{\chi}$ and
$\rho_{BR}=\tr_{A}  \ket{\chi}\bra{\chi} $, 
then 
\beq S_\phi(\rho )=S_{\phi_R}(\rho_R)\;\hbox{ and }\;S_{\phi_A}(\rho_A)=S_{\phi_{BR}}(\rho_{BR}) 
\eeq
if $\phi$ is $(\rho,\rho_R)$-conjugate to $\phi_R$ and $\phi_A$ is $(\rho_A,\rho_{BR})$-conjugate to $\phi_{BR}$.
\end{corollary}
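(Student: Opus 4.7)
The plan is to read the corollary as two separate applications of Proposition 3, obtained by regarding the tripartite pure state $\ket{\chi}$ on $ABR$ in two different ways as a bipartite pure state.

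For the first identity, I view $\ket{\chi}$ as a pure state on the bipartite system $(AB)\otimes R$. The two marginals are precisely $\rho=\tr_R\ket{\chi}\bra{\chi}$ and $\rho_R=\tr_{AB}\ket{\chi}\bra{\chi}$, which by the Schmidt decomposition share the same collection of nonzero eigenvalues $\{\lam_i\}$. Proposition 3, with $A\mapsto AB$, $R\mapsto R$, $\phi_1\mapsto\phi$, $\phi_2\mapsto\phi_R$, then immediately yields $S_\phi(\rho)=S_{\phi_R}(\rho_R)$ provided $\phi$ is $(\rho,\rho_R)$-conjugate to $\phi_R$ in the sense of the preceding proposition.

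For the second identity, I regroup the systems and view $\ket{\chi}$ as a pure state on the bipartite system $A\otimes(BR)$. Here the step that deserves a sentence of justification is that $\rho_A$ is indeed the marginal on $A$ of $\ket{\chi}\bra{\chi}$: by successive partial traces,
\beq
\tr_{BR}\ket{\chi}\bra{\chi}=\tr_B\bigl(\tr_R\ket{\chi}\bra{\chi}\bigr)=\tr_B\rho=\rho_A,
\nn
\eeq
while $\rho_{BR}=\tr_A\ket{\chi}\bra{\chi}$ is given. So $\rho_A$ and $\rho_{BR}$ are the two marginals of a pure state and hence share the same nonzero spectrum. A second application of Proposition 3, now with $A\mapsto A$, $R\mapsto BR$, $\phi_1\mapsto\phi_A$, $\phi_2\mapsto\phi_{BR}$, gives $S_{\phi_A}(\rho_A)=S_{\phi_{BR}}(\rho_{BR})$ under the assumed conjugacy of $\phi_A$ with $\phi_{BR}$.

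There is really no obstacle beyond bookkeeping: the only thing one must be careful about is that the notion of conjugacy introduced in Proposition 3 refers to expectations $\langle e_i|\phi|e_i\rangle$ on the eigenbases of the two marginals in each chosen bipartition, so the two conjugacy hypotheses in the corollary are exactly the hypotheses of Proposition 3 in its two incarnations here. Thus the corollary follows at once.
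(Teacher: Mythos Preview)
Your proposal is correct and matches the paper's intended argument exactly: the corollary is stated immediately after Proposition~3 with the phrase ``From this one has the following corollary,'' and your two applications of that proposition---to the bipartitions $(AB)\otimes R$ and $A\otimes (BR)$---are precisely the content the paper leaves implicit. The only extra care you take, verifying $\tr_{BR}\ket{\chi}\bra{\chi}=\tr_B\rho=\rho_A$, is a sensible consistency check.
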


\section{A diagonalisation bound}

Another simple trace inequality, which is also based on the weighted Gibbs inequality, deals with the projection of the density matrix on its diagonal in a given basis, as occurs for example when projective measurements are performed. Let $\rho$ be a density matrix
in $\mathcal{H}$. 
Let $\ket{f_1},\ldots ,\ket{f_d}$ be a basis in $\mathcal{H}$ and $\rho^{\rm d}$ denote the diagonal part of 
$\rho$ in this basis, i.e. $\langle f_j| \rho^{\rm d}| f_k\rangle =\delta_{jk}\langle f_j|\rho | f_j\rangle$ for $1\leq j,k\leq d$.
Then we have the following bound for the weighted entropy of $\rho^{\rm d}$

\begin{theorem} \label{thm:diag} Under the condition $\tr\phi\rho \geq \tr \phi\rho^{\rm d}$
\beq
S_\psi (\rho^{\rm d})\geq S_{\phi} (\rho ),
\eeq
with equality if and only if $\rho=\rho^{\rm d}$ and where $\psi$ fulfils $\langle f_j |\psi 
\rho^{\rm d}  | f_j\rangle=\langle f_j |\phi \rho  | f_j\rangle$.
\end{theorem}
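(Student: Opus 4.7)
The plan is to obtain the inequality as a direct application of the weighted Gibbs inequality (Theorem~\ref{Gibbs}), with $\rho^{\rm d}$ playing the role of the comparison state $\sigma$. Since the condition $\tr\phi\rho\geq\tr\phi\rho^{\rm d}$ is exactly the hypothesis, the Gibbs inequality immediately gives
\beq
-\tr(\phi\rho\log\rho)\;\leq\;-\tr(\phi\rho\log\rho^{\rm d}),\nn
\eeq
i.e.\ $S_\phi(\rho)\leq -\tr(\phi\rho\log\rho^{\rm d})$. What remains is to identify the right-hand side with $S_\psi(\rho^{\rm d})$.

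For this identification, I would exploit that $\rho^{\rm d}$ is diagonal in the basis $\{\ket{f_j}\}$, hence so is $\log\rho^{\rm d}=\sum_j(\log p_j)\ket{f_j}\bra{f_j}$, where $p_j:=\langle f_j|\rho|f_j\rangle=\langle f_j|\rho^{\rm d}|f_j\rangle$. Because $\log\rho^{\rm d}$ is diagonal, the trace simplifies via insertion of the basis resolution:
\beq
-\tr(\phi\rho\log\rho^{\rm d})\;=\;-\sum_j\langle f_j|\phi\rho|f_j\rangle\log p_j.\nn
\eeq
Similarly, since $\rho^{\rm d}\log\rho^{\rm d}$ is diagonal with entries $p_j\log p_j$, one finds
\beq
S_\psi(\rho^{\rm d})\;=\;-\tr(\psi\rho^{\rm d}\log\rho^{\rm d})\;=\;-\sum_j\langle f_j|\psi\rho^{\rm d}|f_j\rangle\log p_j.\nn
\eeq
The defining property $\langle f_j|\psi\rho^{\rm d}|f_j\rangle=\langle f_j|\phi\rho|f_j\rangle$ then makes the two right-hand sides coincide term by term, yielding $S_\psi(\rho^{\rm d})=-\tr(\phi\rho\log\rho^{\rm d})\geq S_\phi(\rho)$.

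For the equality clause, I would invoke the equality case of Theorem~\ref{Gibbs}, which gives $\rho=\rho^{\rm d}$; conversely, if $\rho=\rho^{\rm d}$ then $\rho$ itself is diagonal in $\{\ket{f_j}\}$ and the diagonal operations are trivial, so $S_\psi(\rho^{\rm d})=-\sum_j\langle f_j|\psi\rho|f_j\rangle\log p_j=-\sum_j\langle f_j|\phi\rho|f_j\rangle\log p_j=S_\phi(\rho)$.

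The only genuine subtlety I foresee is the non-commutativity of $\phi$ with $\rho$ (and of $\psi$ with $\rho^{\rm d}$), which is why the matching condition on $\psi$ is imposed on the diagonal matrix elements of the products $\phi\rho$ and $\psi\rho^{\rm d}$ rather than on $\phi,\psi$ separately; the trick of using the diagonality of $\log\rho^{\rm d}$ in the chosen basis bypasses this obstacle cleanly, reducing the whole argument to scalar sums over $j$.
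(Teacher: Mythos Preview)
Your proof is correct and follows essentially the same route as the paper: apply the weighted Gibbs inequality (Theorem~\ref{Gibbs}) with $\sigma=\rho^{\rm d}$, then evaluate the cross term $-\tr(\phi\rho\log\rho^{\rm d})$ in the diagonalising basis $\{\ket{f_j}\}$ and identify it with $S_\psi(\rho^{\rm d})$ via the defining condition on $\psi$. Your treatment of the equality clause and the remark on non-commutativity are slightly more explicit than the paper's, but the argument is the same.
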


\begin{proof}
The proof is again an application of the weighted Gibbs inequality, Theorem \ref{Gibbs}. Choosing $\sigma=\rho^{\rm d}$, by the condition of the theorem the Gibbs inequality can be used, yielding
\bea
0&\leq& D_\phi(\rho \| \rho^{\rm d} )= -S_{\phi} (\rho )  - \tr \phi \rho \log \rho^{\rm d}   \nonumber\\
&=& -S_{\phi} (\rho )  - \sum_j  \langle f_j |\phi \rho  | f_j\rangle \log  \langle f_j |\rho^{\rm d} | f_j\rangle=-S_{\phi} (\rho ) +S_\psi (\rho^{\rm d})
\eea
with inequality for $\rho=\rho^{\rm d}$. This completes the proof.
\end{proof}

Note that in the special case of von Neumann entropy, one has $\phi=\psi=1_{\mathcal{H}}$ and all conditions of the theorem are automatically satisfied.

 \section{Subadditivity of quantum weighted entropy}
Let us first focus on a composite system $AB$ of two components $A$ and $B$ with density matrix $\rho_{AB}$ and weight $\phi_{AB}=\phi_A\otimes\phi_B$. Recall the standard reduced density matrices defined by taking the partial trace, i.e. $\rho_{A} =\tr_B (\rho_{AB})$ and so on. We can now prove the following subadditivity property of quantum weighted entropy:

\begin{theorem}[Subadditivity] \label{thm:sub}
Under the condition $\tr_{AB}(\phi_{AB} \rho_{AB})\geq \tr_{A}(\phi_{A} \rho_{A}) \tr_{B}(\phi_{B} \rho_{B})$ one has 
\beq
S_{\phi_{AB}}(\rho_{AB})\leq S_{\psi_{A}}(\rho_{A})+S_{\psi_{B}}(\rho_{B})
\eeq
with equality for $\rho_{AB}=\rho_A\otimes\rho_B$, where the reduced weights are defined implicitly through $\psi_{A} \rho_{A}=\tr_B(\phi_{AB} \rho_{AB})$ and similarly for $B$.
\end{theorem}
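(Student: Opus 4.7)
The natural strategy is to mimic the classical proof of subadditivity by applying the weighted Gibbs inequality (Theorem \ref{Gibbs}) with the product state $\sigma = \rho_A \otimes \rho_B$ as the reference. First I would verify the hypothesis of Theorem \ref{Gibbs}: since $\phi_{AB}=\phi_A\otimes\phi_B$, one computes $\tr\!\bigl(\phi_{AB}(\rho_A\otimes\rho_B)\bigr)=\tr_A(\phi_A\rho_A)\tr_B(\phi_B\rho_B)$, so the stated condition $\tr_{AB}(\phi_{AB}\rho_{AB})\geq\tr_A(\phi_A\rho_A)\tr_B(\phi_B\rho_B)$ is precisely what is needed to conclude $D_{\phi_{AB}}(\rho_{AB}\|\rho_A\otimes\rho_B)\geq 0$.

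Next, I would use the tensor-product identity $\log(\rho_A\otimes\rho_B)=\log\rho_A\otimes 1_B+1_A\otimes\log\rho_B$ to split the cross term of the relative entropy, obtaining
\beq
0\leq -S_{\phi_{AB}}(\rho_{AB})-\tr_{AB}\!\bigl[\phi_{AB}\rho_{AB}(\log\rho_A\otimes 1_B)\bigr]-\tr_{AB}\!\bigl[\phi_{AB}\rho_{AB}(1_A\otimes\log\rho_B)\bigr].
\eeq
The central manipulation is to rewrite each of the two cross traces as a single-system weighted entropy. Using cyclicity together with the partial-trace identity $\tr_{AB}[X(Y\otimes 1_B)]=\tr_A[\tr_B(X)\,Y]$, the first of these traces becomes $\tr_A[\tr_B(\phi_{AB}\rho_{AB})\log\rho_A]$, and the defining relation $\psi_A\rho_A=\tr_B(\phi_{AB}\rho_{AB})$ then identifies this with $\tr_A[\psi_A\rho_A\log\rho_A]=-S_{\psi_A}(\rho_A)$. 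An analogous manipulation on the $B$-side yields $-S_{\psi_B}(\rho_B)$, and rearranging the inequality gives the claimed bound.

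For the equality case I would specialise to $\rho_{AB}=\rho_A\otimes\rho_B$: then $\psi_A\rho_A=\phi_A\rho_A\cdot\tr_B(\phi_B\rho_B)$, so that $S_{\psi_A}(\rho_A)=\tr_B(\phi_B\rho_B)\,S_{\phi_A}(\rho_A)$, and symmetrically for $B$. A direct computation of $S_{\phi_{AB}}(\rho_A\otimes\rho_B)$ using the tensor-product structure of the logarithm then produces exactly $S_{\psi_A}(\rho_A)+S_{\psi_B}(\rho_B)$, matching the upper bound.

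The main obstacle I anticipate is the careful bookkeeping under non-commutativity: since $\phi_A$ need not commute with $\rho_A$, the quantity carrying well-defined meaning is the \emph{product} $\psi_A\rho_A$ rather than $\psi_A$ in isolation (which may fail to be Hermitian positive definite, or even to exist as a matrix, if $\rho_A$ has a nontrivial kernel). Consequently the substitution $\tr_B[(\phi_A\otimes\phi_B)\rho_{AB}]=\psi_A\rho_A$ must be inserted \emph{before} any further cyclic reordering, and one must resist the temptation to pull $\psi_A$ out as an independent factor. Once this order-of-operations discipline is respected, the trace manipulations proceed mechanically from the weighted Gibbs inequality and no additional ingredient beyond Theorem \ref{Gibbs} is required.
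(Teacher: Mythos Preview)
Your proposal is correct and follows essentially the same route as the paper: apply the weighted Gibbs inequality (Theorem~\ref{Gibbs}) with $\sigma=\rho_A\otimes\rho_B$, split $\log(\rho_A\otimes\rho_B)$ additively, and reduce the cross terms via partial traces together with the defining relation $\psi_A\rho_A=\tr_B(\phi_{AB}\rho_{AB})$. Your discussion of the equality case and of the non-commutativity bookkeeping is somewhat more explicit than the paper's, but no new ideas are involved.
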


\begin{proof} 
The condition stated in Theorem \ref{thm:sub}, i.e. $\tr_{AB}(\phi_{AB} \rho_{AB})\geq \tr_{A}(\phi_{A} \rho_{A}) \tr_{B}(\phi_{B} \rho_{B})$, ensures that we can use the weighted Gibbs inequality with $\sigma_{AB} =\rho_A\otimes\rho_B$ and $\phi =\phi_{AB}=\phi_A \otimes\phi_B$. Further abbreviating $\rho=\rho_{AB}$, one gets  
\beq
D_\phi(\rho \| \rho_A\otimes\rho_B )\geq 0  
\eeq
with equality if and only if $\rho =\rho_A\otimes\rho_B $. Simplifying the above gives
\bea
0&\leq& D_\phi(\rho \| \rho_A\otimes\rho_B ) \nonumber\\
&=&\tr_{AB} \left\{ \phi \rho \left( \log\,\rho - \log\,(\rho_A\otimes\rho_B) \right)\right\}
\eea
Hence,
\bea
S_\phi (\rho ) \! &\leq& \!\!\! - \tr_{AB} \left\{ \phi\rho \left[ \log\,(\rho_A\otimes 1_B)- \log\,(1_A\otimes\rho_B)\right]\right\} \nonumber \\
&=&  \!\!\!  - \tr_{A}\{ \tr_B( \phi\rho) \log\rho_A  \} \!-\! \tr_{B}\{ \tr_A( \phi\rho  ) \log\rho_B \} \nonumber \\
&=& S_{\psi_{A}}(\rho_{A})+S_{\psi_{B}}(\rho_{B})
\eea
under the above definition of reduced weights. This completes the proof.
\end{proof}

\section{Concavity of quantum weighted entropy}

\def\cK{\mathcal K}  

We can use the subadditivity property of quantum weighted entropy proved in the previous section to show that quantum weighted entropy is concave in analogy to the case of standard von Neumann entropy. 
%
%
\begin{theorem} \label{ThmConcavity}
Suppose that $\rho^{(1)},\ldots ,\rho^{(r)}$ are density matrices of a system $A$ a Hilbert space $\mathcal{H}_A$ and $\bfb =(b_1,\ldots b_r)$ is 
a probability vector, with non-negative entries and $\sum_{1\leq l\leq r} b_l =1$. 
Then
\beq
S_\phi\left(\sum_lb_l\rho^{(l)} \right) \geq\sum_l b_l S_\phi(\rho^{(l)}) \label{concave-thm-eq}
\eeq
with equality if and only if $b_l=1$ for some $l$ or $\rho^{(l)}=\rho^{(1)}$ $\forall$ $l$.
\end{theorem}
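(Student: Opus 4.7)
The plan is to mimic the classical route from subadditivity to concavity: introduce an ancillary classical register $B$ encoding the mixture index, build a joint state $\rho_{AB}$ whose $A$-marginal is the convex combination $\rho=\sum_l b_l\rho^{(l)}$, choose the weights so that Theorem~\ref{thm:sub} yields exactly the desired inequality.

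Concretely, let $\{\ket{l}\}_{l=1}^r$ be an orthonormal basis of an auxiliary Hilbert space $\mathcal{H}_B$ and define
\beq
\rho_{AB}=\sum_l b_l\,\rho^{(l)}\otimes\ket{l}\bra{l},\qquad \phi_{AB}=\phi\otimes 1_B.
\eeq
Because $\rho_{AB}$ is block diagonal in $B$, $\log\rho_{AB}=\sum_l\bigl(\log b_l\cdot 1_A+\log\rho^{(l)}\bigr)\otimes\ket{l}\bra{l}$, so a direct calculation gives
\beq
S_{\phi_{AB}}(\rho_{AB})=-\sum_l b_l(\log b_l)\,\tr(\phi\rho^{(l)})+\sum_l b_l\,S_\phi(\rho^{(l)}).
\eeq
The marginals are $\rho_A=\rho$ and $\rho_B=\sum_l b_l\ket{l}\bra{l}$, so I would next verify the hypothesis of Theorem~\ref{thm:sub}: one has $\tr(\phi_{AB}\rho_{AB})=\sum_l b_l\tr(\phi\rho^{(l)})=\tr(\phi\rho)$, while $\tr(1_B\rho_B)=1$, and hence the condition $\tr(\phi_{AB}\rho_{AB})\geq\tr(\phi_A\rho_A)\tr(\phi_B\rho_B)$ holds as an equality.

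Subadditivity then yields $S_{\phi_{AB}}(\rho_{AB})\leq S_{\psi_A}(\rho_A)+S_{\psi_B}(\rho_B)$, with reduced weights determined by $\psi_A\rho_A=\tr_B(\phi_{AB}\rho_{AB})=\phi\rho$ and $\psi_B\rho_B=\tr_A(\phi_{AB}\rho_{AB})=\sum_l b_l\tr(\phi\rho^{(l)})\ket{l}\bra{l}$. Since $S_{\psi_A}(\rho_A)$ depends on $\psi_A$ only through the product $\psi_A\rho_A$, the first term is nothing but $S_\phi(\rho)$. Similarly, $\psi_B$ is diagonal in $\{\ket{l}\}$ with $\bra{l}\psi_B\ket{l}=\tr(\phi\rho^{(l)})$ (on the support of $\rho_B$), giving $S_{\psi_B}(\rho_B)=-\sum_l\tr(\phi\rho^{(l)})\,b_l\log b_l$. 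Substituting into subadditivity, the terms $-\sum_l b_l(\log b_l)\tr(\phi\rho^{(l)})$ on the two sides cancel and the inequality \eqref{concave-thm-eq} drops out.

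The only subtle step is identifying $\psi_A$: it need not equal $\phi$ when $\rho$ is singular, but this does not matter since $S_{\psi_A}(\rho_A)=-\tr(\psi_A\rho_A\log\rho_A)=-\tr(\phi\rho\log\rho)=S_\phi(\rho)$. For the equality case, Theorem~\ref{thm:sub} forces $\rho_{AB}=\rho_A\otimes\rho_B$; comparing the $\ket{l}\bra{l}$ blocks gives $b_l\rho^{(l)}=b_l\rho$ for every $l$, so either exactly one $b_l=1$ or all $\rho^{(l)}$ with $b_l>0$ coincide, which matches the statement.
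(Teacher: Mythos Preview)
Your proof is correct and follows essentially the same approach as the paper: both introduce a classical ancilla carrying the mixture label, form the block-diagonal state $\sum_l b_l\rho^{(l)}\otimes\ket{l}\bra{l}$ with weight $\phi\otimes 1$, compute its weighted entropy directly, verify the subadditivity hypothesis holds with equality, and then apply Theorem~\ref{thm:sub} so that the weighted Shannon term $-\sum_l b_l\tr(\phi\rho^{(l)})\log b_l$ cancels from both sides. Your remark that $S_{\psi_A}(\rho_A)$ depends on $\psi_A$ only through the product $\psi_A\rho_A$ is exactly how the paper handles the identification $S_{\psi_A}(\rho_A)=S_\phi(\sigma)$, and your equality analysis matches the paper's as well.
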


%
%


\begin{proof} 
Set
\beq
\sigma=\sum\limits_lb_l\rho^{(l)},\quad\!\! {\rB}_l=\tr\phi\rho^{(l)},\quad\!\! {\rA}=\tr\phi\sigma =\sum_lb_l\rB_l.
\eeq
Recall the expressions for the standard Shannon entropy of $\bfb$ and the weighted Shannon entropy of $\bfb$ with weight $B$,
\beq
h(\bfb )=-\sum_lb_l\log\,b_l, \quad h_{\rB} (\bfb ) =-\sum_l{\rB}_l  b_l\log\,b_l. 
\eeq
Take an auxiliary system $R$ with Hilbert space $\mathcal{H}_R$ of dimension $r$
and fix a basis $\ket{e_1} ,\ldots,\ket{e_r} $ in  $\mathcal{H}_R$. Consider a density
matrix $\rho$ on $AR$ defined by the condition
that for all $\ket{v},\ket{v'}\in \mathcal{H}_A$ and $1\leq l,l^\prime\leq r$:
\beq
\big\langle v\otimes e_l|\rho | v'\otimes e_{l'}\big\rangle =
b_l\langle v |\rho^{(l)}|v'\rangle\delta_{l,l'}. \label{conc-rhodef}
\eeq
It is easily verified that $\rho$ is indeed a density matrix, i.e.\ a positive-definite operator
of trace 1.
Then 
\beq
\rho_A =\tr_R \rho =\sum_l b_l\rho^{(l)}=\sigma 
\eeq 
and
$\rho_R=\tr_A \rho$ is diagonal in basis 
$\ket{e_1} ,\ldots,\ket{e_r}$, with diagonal entries $b_1,\ldots 
b_r$. Also, if $\rho^{(l)}$ has eigenvectors $\ket{e^{(l)}_j}$ with eigenvalues
$\lambda^{(l)}_j$ then $\rho$ has the eigenvectors $\ket{e^{(l)}_j}\otimes
\ket{e_l} $ with the eigenvalues $\lambda^{(l)}_jb_l$. 
Hence, with $1_{R}$ denoting the unit operator on $R$ one has
\bea
S_{\phi\otimes 1_R}(\rho)&=&-\sum_{j,l}
\big\langle e^{(l)}_j\big|\phi\big| e^{(l)}_j\big\rangle(\lambda^{(l)}_jb_l)\log (\lambda^{(l)}_jb_l)  \nn\\
&=&-\sum_{l}b_l\sum_j\big\langle e^{(l)}_j\big|\phi\big| e^{(l)}_j\big\rangle\lambda^{(l)}_j
\log\lambda^{(l)}_j 
-\sum_l{\rB}_lb_l\log b_l     \nn\\
&=&\sum_l b_l S_\phi(\rho^{(l)})+ h_{\rB} (\bfb ). \label{Concave-eq-S}
\eea
Note that $\tr(\phi\otimes 1_R)\rho=\tr(\phi\rho_A)$, so the bound $\tr(\phi\otimes 1_R)\rho\geq \tr(\phi\otimes 1_R)(\rho_A\otimes\rho_R)$ is fulfilled. 
Finally, by \eqref{conc-rhodef} the partial trace  $T=\tr_A(\phi\otimes 1_R)\rho$ is a diagonal matrix 
in the basis $\ket{e_1} ,\ldots,\ket{e_r} $ of $R$, with entries 
\beq
\langle e_l|  T | e_{l'}\rangle =\delta_{l,l'}b_l\,\rB_l. 
\eeq
To complete the proof we use the subadditivity property proven in the previous section in Theorem \ref{thm:sub} for the joint system $AR$ with density matrix $\rho$ and weight $\phi\otimes 1_R$. Therefore, we introduce reduced weights defined implicitly through 
\bea
\psi_{A} \rho_{A}&=&\tr_R (\phi\otimes 1_R) \rho = \phi \sigma \nn \\
\psi_{R} \rho_{R}&=& \tr_A (\phi\otimes 1_R) \rho 
=T= \sum\limits_l b_l \, {\rB}_l    | e_l\rangle\langle e_l|.     \nn
\eea
Then one has
\beq
S_{\psi_A}(\rho_A)=S_\phi \left(\sigma\right),\;\hbox{ and }\;
S_{\psi_R}(\rho_R)=h_{\rB}(\bfb ) .
\eeq 
Therefore, 
subadditivity (Theorem \ref{thm:sub}) yields
\beq
S_{\phi\otimes 1_R}(\rho)\leq S_{\psi_A}(\rho_A)+S_{\psi_R}(\rho_R)
=S_\phi \left(\sigma\right)+h_{\rB}(\bfb)
\eeq
with equality if and only if $\rho =\rho_A\otimes\rho_R,$ i.e. $h(\bfb )=0$ or $\rho^{(l)}=\rho^{(1)}$ $\forall$ $l$. This together with \eqref{Concave-eq-S} gives \eqref{concave-thm-eq}. 
\end{proof}

\section{Araki-Lieb inequality for quantum weighted entropy}
Consider a composite system $AB$ with density matrix $\rho$, weight $\phi$ and partial density matrices $\rho_A=\tr_B \rho$ and  $\rho_B=\tr_A\rho$.
Construct a purification by introducing a reference system $R$ with Hilbert space isomorphic to the Hilbert space of $AB$,
and a pure state $\ket{\chi}$ on $ABR$ such that $\rho =\tr_R\ket{\chi}\bra{\chi}$. Furthermore, set $\rho_R= \tr_{AB} \ket{\chi}\bra{\chi}$ and $\rho_{BR}=\tr_{A}  \ket{\chi}\bra{\chi} $, then by Corollary \ref{cor1} we have 
\beq \label{AL1}
S_\phi(\rho )=S_{\phi_R}(\rho_R)\;\hbox{ and }\;S_{\phi_A}(\rho_A)=S_{\phi_{BR}}(\rho_{BR}) 
\eeq
if $\phi$ is $(\rho,\rho_R)$-conjugate to $\phi_R$ and $\phi_A$ is $(\rho_A,\rho_{BR})$-conjugate to $\phi_{BR}$. Combining this with subadditivity gives rise to the following result.

\begin{theorem}[Weighted Araki-Leib inequality] \label{thm:AL} One has
\beq
S_\phi(\rho)\geq \Big(\sup_{\Psi\in\mathcal{D}} \big[S_{\psi_A}(\rho_A)-S_{\psi_B}(\rho_B)\big]\Big)\vee
 \Big(\sup_{\bar{\Psi}\in\bar{\mathcal{D}}} \big[S_{\opsi_B}(\rho_B)-S_{\opsi_A}(\rho_A)\big]\Big),
\eeq
where the set $\mathcal{D}(\phi)$ consists of all pairs $\Psi=(\psi_A,\psi_B)$ for which there exists a $\phi_{BR}$ and $\psi_{R}^*$ implicitly defined through $\psi_{R}^*\rho_R=\tr_B \phi_{BR}\rho_{BR}$ satisfying $\tr_{BR} \phi_{BR}\rho_{BR}\geq \tr_{BR} \phi_{BR}\rho_{B}\otimes\rho_{R}$ and $\psi_{R}^*$ is $(\rho_R,\rho)$-conjugate to $\phi$, such that $\psi_A$ is $(\rho_A,\rho_{BR})$-conjugate to $\phi_{BR}$ and $\psi_B$ is $\rho_B$-conjugate to $\psi_{B}^*$ defined through $\psi_{B}^*\rho_B=\tr_R \phi_{BR}\rho_{BR}$; and similarly for  $\bar{\mathcal{D}}(\phi)$.
\end{theorem}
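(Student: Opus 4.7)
The plan is to mimic the standard proof of the Araki--Lieb inequality---two purification identities sandwiching one application of subadditivity---while carefully tracking how the weight transforms at each step. Fix the purification $\ket{\chi}$ on $ABR$ of $\rho$ with $\rho_R=\tr_{AB}\ket{\chi}\bra{\chi}$ and $\rho_{BR}=\tr_{A}\ket{\chi}\bra{\chi}$, as set up in Corollary~\ref{cor1}. I shall prove $S_\phi(\rho)\ge S_{\psi_A}(\rho_A)-S_{\psi_B}(\rho_B)$ for every $\Psi=(\psi_A,\psi_B)\in\mathcal{D}(\phi)$; the mirror bound with $\bar\Psi\in\bar{\mathcal{D}}(\phi)$ follows by exchanging the roles of $A$ and $B$, and the theorem then follows by taking suprema and joining them with $\vee$.

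The first step is to invoke Corollary~\ref{cor1} twice: the assumption that $\psi_A$ is $(\rho_A,\rho_{BR})$-conjugate to $\phi_{BR}$ yields
\[
S_{\psi_A}(\rho_A)=S_{\phi_{BR}}(\rho_{BR}),
\]
and the assumption that $\psi_R^*$ is $(\rho_R,\rho)$-conjugate to $\phi$ yields $S_\phi(\rho)=S_{\psi_R^*}(\rho_R)$. The second step is to apply the subadditivity theorem, Theorem~\ref{thm:sub}, to the bipartite system $BR$ with density matrix $\rho_{BR}$ and weight $\phi_{BR}$. The inequality $\tr_{BR}\phi_{BR}\rho_{BR}\ge\tr_{BR}\phi_{BR}(\rho_B\otimes\rho_R)$ built into $\mathcal{D}(\phi)$ is exactly the admissibility hypothesis of that theorem, and the reduced weights it produces---defined by $\psi_B^*\rho_B=\tr_R\phi_{BR}\rho_{BR}$ and $\psi_R^*\rho_R=\tr_B\phi_{BR}\rho_{BR}$---coincide with the ones singled out in $\mathcal{D}(\phi)$, so subadditivity reads
\[
S_{\phi_{BR}}(\rho_{BR})\le S_{\psi_B^*}(\rho_B)+S_{\psi_R^*}(\rho_R).
\]
The third step uses part (ii) of the first proposition: since $\psi_B$ is $\rho_B$-conjugate to $\psi_B^*$, one has $S_{\psi_B^*}(\rho_B)=S_{\psi_B}(\rho_B)$. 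Substituting the three identities into the subadditivity bound produces $S_{\psi_A}(\rho_A)\le S_{\psi_B}(\rho_B)+S_\phi(\rho)$, which is the claim. The symmetric bound is obtained by the same argument after interchanging $A$ and $B$ throughout (one then applies subadditivity on $AR$ with a weight $\phi_{AR}$), which is precisely how $\bar{\mathcal{D}}(\phi)$ is defined.

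The main obstacle is not a new analytic inequality but the bookkeeping of non-commuting weights. Since $\phi$, $\phi_{BR}$, $\psi_B^*$ and their relatives do not commute with the density matrices they multiply, the only tools available to shift a weighted entropy across a purification, or to replace one weight by another, are the diagonal-matrix-element identities of the first proposition and Corollary~\ref{cor1}. The elaborate definition of $\mathcal{D}(\phi)$ is engineered precisely so that three such substitutions compose consistently: one to transfer $S_{\phi_{BR}}(\rho_{BR})$ to $S_{\psi_A}(\rho_A)$, one to transfer $S_{\psi_R^*}(\rho_R)$ to $S_\phi(\rho)$, and one to free the reduced weight $\psi_B^*$, which is pinned down by the choice of $\phi_{BR}$, into the independent parameter $\psi_B$ over which the supremum is taken. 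In the unweighted case $\phi=1_\mathcal{H}$ every conjugacy becomes trivial, all $\psi$'s can be set to the identity, the admissibility condition is automatic, and the argument collapses to the textbook Araki--Lieb proof.
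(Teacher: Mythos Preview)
Your proof is correct and follows essentially the same route as the paper: reduce to the one-sided bound, apply subadditivity on $BR$ with weight $\phi_{BR}$, and then use the three conjugacy relations built into $\mathcal{D}(\phi)$ to translate $S_{\phi_{BR}}(\rho_{BR})$, $S_{\psi_R^*}(\rho_R)$, $S_{\psi_B^*}(\rho_B)$ into $S_{\psi_A}(\rho_A)$, $S_\phi(\rho)$, $S_{\psi_B}(\rho_B)$ respectively. The only minor remark is that Theorem~\ref{thm:sub} is stated for product weights $\phi_A\otimes\phi_B$, whereas here $\phi_{BR}$ need not factor; but as you implicitly use (and as the paper does too), the proof of subadditivity only requires the Gibbs condition $\tr\phi_{BR}\rho_{BR}\ge\tr\phi_{BR}(\rho_B\otimes\rho_R)$, which is precisely the hypothesis encoded in $\mathcal{D}(\phi)$.
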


\begin{proof} To proof the theorem is suffices to show that 
\beq\label{AL2}
S_\phi(\rho)\geq S_{\psi_A}(\rho_A)-S_{\psi_B}(\rho_B), \quad \mathrm{and}\quad S_\phi(\rho)\geq S_{\opsi_B}(\rho_B)-S_{\opsi_A}(\rho_A)
\eeq
for all $(\psi_A,\psi_B)\in\mathcal{D}(\phi)$ and $(\opsi_A,\opsi_B)\in\bar{\mathcal{D}}(\phi)$. We start by establishing the first inequality in \eqref{AL2}. For any $(\psi_A,\psi_B)\in\mathcal{D}(\phi)$ there exists a $\phi_{BR}$, $\psi_{B}^*$ and $\psi_{R}^*$ with $\psi_{B}^*\rho_B=\tr_R \phi_{BR}\rho_{BR}$ and $\psi_{R}^*\rho_R=\tr_B \phi_{BR}\rho_{BR}$, satisfying $\tr_{BR} \phi_{BR}\rho_{BR}\geq \tr_{BR} \phi_{BR}\rho_{B}\otimes\rho_{R}$. We can thus apply the subadditivity inequality, Theorem \ref{thm:sub}, to obtain
\beq
S_{\phi_{BR}} (\rho_{BR})\leq S_{\phi_{B}^*} (\rho_{B})+S_{\phi_{R}^*} (\rho_{R})
\eeq
Since $\psi_A$ and $\phi_{BR}$ are $(\rho_A,\rho_{BR})$-conjugate, one has $S_{\phi_{BR}} (\rho_{BR})=S_{\psi_A}(\rho_A)$. Further, since $\psi_{B}^*$ and $\psi_B$ are $\rho_B$-conjugate, one has $S_{\phi_{B}^*} (\rho_{B})=S_{\psi_B}(\rho_B)$. Moreover, the condition that $\psi_{R}^*$ is $(\rho_R,\rho)$-conjugate to $\phi$ implies $S_{\phi_{R}^*} (\rho_{R}) =S_\phi(\rho)$. This proves the first inequality in \eqref{AL2}. The second inequality in \eqref{AL2} is established in a similar manner. This completes the proof.
\end{proof}

\begin{remark} Note that in the case of von Neumann entropy with $\phi=1_{AB}$, one has $\psi_A=\opsi_A=1_A$ and $\psi_B=\opsi_B=1_B$ and thus \eqref{AL2} simply reads
\beq
S(\rho)\geq |S(\rho_A)-S(\rho_B)|
\eeq 
which is the Araki-Leib inequality for von Neumann entropy.
\end{remark}

\section{Strong subadditivity of quantum weighted entropy}

Another very interesting trace inequality concerns the strong subadditivity property for a composite system $ABC$ with density matrix $\rho_{ABC}$ and weight $\phi_{ABC}=\phi_A\otimes\phi_B\otimes\phi_C$.
\begin{theorem}[Strong subadditivity]  \label{thm:strong}
Under the conditions $(i)$ $\tr_{ABC}(\phi_{ABC} \rho_{ABC})\geq \tr_B\left\{\phi_B \tr_{A}(\phi_{A} \rho_{AB}) \tr_{C}(\phi_{C} \rho_{BC}) \rho_B^{-1} \right\}$, as well as $(ii)$ $[\rho_{AB},\phi_A\otimes \phi_B]=0$ and $[\tr_C(\phi_C \rho_{BC}),\rho_B]=0$ one has
\beq 
S_{\phi_{ABC}}(\rho_{ABC})\!+\!S_{\psi_{B}}(\rho_{B}) \! \leq \! S_{\psi_{AB}}(\rho_{AB}) \! +\! S_{\psi_{BC}}(\rho_{BC})
\eeq
where the reduced weights are defined as above, i.e. $\psi_{AB} \rho_{AB}=\tr_C(\phi_{ABC} \rho_{ABC})$ and so on.
\end{theorem}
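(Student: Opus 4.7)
The plan is to replicate, in the weighted setting, the classical relative-entropy proof of strong subadditivity: introduce a comparison state $\sigma_{ABC}$ that encodes Markov factorization on $B$, apply the weighted Gibbs inequality (Theorem \ref{Gibbs}), and then identify each of the three resulting trace terms with a weighted entropy of a marginal.

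Concretely, I would take $\sigma_{ABC}:=\rho_{AB}\,\rho_B^{-1}\,\rho_{BC}$, with the standard identity-padded meanings ($\rho_{AB}\otimes 1_C$, $1_A\otimes\rho_B^{-1}\otimes 1_C$, $1_A\otimes\rho_{BC}$). Condition $(i)$ is precisely what guarantees the Gibbs hypothesis $\tr(\phi_{ABC}\rho_{ABC})\geq\tr(\phi_{ABC}\sigma_{ABC})$: expanding the right-hand trace by tracing out $A$ and then $C$ produces the operators $\tr_A(\phi_A\rho_{AB})$ and $\tr_C(\phi_C\rho_{BC})$ on $B$, and the commutation $[\tr_C(\phi_C\rho_{BC}),\rho_B]=0$ from $(ii)$ is exactly what lets $\rho_B^{-1}$ be moved past $\tr_C(\phi_C\rho_{BC})$ to match the right-hand side of $(i)$ verbatim.

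Weighted Gibbs then yields
\[
S_{\phi_{ABC}}(\rho_{ABC}) \leq -\tr\bigl(\phi_{ABC}\rho_{ABC}\log\sigma_{ABC}\bigr),
\]
and the first commutation in $(ii)$, $[\rho_{AB},\phi_A\otimes\phi_B]=0$, is used to split the logarithm on the right (inside this particular trace) as $\log\rho_{AB}-\log\rho_B+\log\rho_{BC}$. Each of the three resulting terms is then identified with a weighted entropy by cycling the trace, tracing out the inert factor, and invoking the reduced-weight definitions: the $\log\rho_{AB}$ piece becomes $S_{\psi_{AB}}(\rho_{AB})$ after substituting $\psi_{AB}\rho_{AB}=\tr_C(\phi_{ABC}\rho_{ABC})$ and cancelling the factor $(\phi_A\otimes\phi_B)$ against its inverse; the $\log\rho_{BC}$ piece yields $S_{\psi_{BC}}(\rho_{BC})$ by the symmetric manoeuvre, tracing out $A$ and using $\psi_{BC}\rho_{BC}=\tr_A(\phi_{ABC}\rho_{ABC})$; and the $\log\rho_B$ piece collapses to $-S_{\psi_B}(\rho_B)$ after tracing out both $A$ and $C$ and using $\psi_B\rho_B=\tr_{AC}(\phi_{ABC}\rho_{ABC})$. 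Rearranging assembles the claimed bound.

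The principal obstacle is the splitting of $\log\sigma_{ABC}$ and the subsequent re-association of each logarithm with its matching $\phi$-factor, since $\sigma_{ABC}$ is not manifestly positive without commutativity of its three ingredients. The role of condition $(ii)$ is to supply precisely the commutations required so that, within the trace against $\phi_{ABC}\rho_{ABC}$, every manipulation reduces to a rearrangement legitimate by cyclicity; verifying that $(ii)$ is neither too weak nor smuggling additional hidden hypotheses is the delicate point. Everything else amounts to tensor-product bookkeeping and the definitions of the reduced weights used already in Theorem \ref{thm:sub}.
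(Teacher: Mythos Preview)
Your proposal has a genuine gap: the comparison operator $\sigma_{ABC}=\rho_{AB}\,\rho_B^{-1}\,\rho_{BC}$ is, in general, not even Hermitian (its adjoint is $\rho_{BC}\,\rho_B^{-1}\,\rho_{AB}$), so the weighted Gibbs inequality of Theorem~\ref{Gibbs} does not apply to it. To make $\sigma_{ABC}$ positive and to split $\log\sigma_{ABC}$ as $\log\rho_{AB}-\log\rho_B+\log\rho_{BC}$ you would need the three marginals $\rho_{AB},\rho_B,\rho_{BC}$ to commute with one another. Condition~(ii), however, says nothing of the sort: it asserts $[\rho_{AB},\phi_A\otimes\phi_B]=0$ and $[\tr_C(\phi_C\rho_{BC}),\rho_B]=0$, which are commutations between \emph{weights} and (functions of) density matrices, not between the density marginals themselves. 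So the claim that ``condition~(ii) supplies precisely the commutations required'' for your $\sigma_{ABC}$ is incorrect.

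The paper avoids this problem by taking instead $X=\exp(\log\rho_{AB}-\log\rho_B+\log\rho_{BC})$, which is positive by construction and whose logarithm splits tautologically; applying the weighted Klein inequality with this $X$ (equivalently, Gibbs with $\sigma=X$) gives
\[
\mathcal{A}\leq \tr\big(\phi_{ABC}\,e^{\log\rho_{AB}-\log\rho_B+\log\rho_{BC}}\big)-\tr(\phi_{ABC}\rho_{ABC}).
\]
The real work is then to bound the first trace on the right by $\tr_B\{\phi_B\,\tr_A(\phi_A\rho_{AB})\,\tr_C(\phi_C\rho_{BC})\,\rho_B^{-1}\}$, which is \emph{not} an identity but a weighted extension of Lieb's three-matrix Golden--Thompson inequality. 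Condition~(ii) is tailored to that step: $[\rho_{AB},\phi_A\otimes\phi_B]=0$ collapses the operator $K_W(Z)$ to $(\phi_A\otimes\phi_B)\rho_{AB}$, and $[\tr_C(\phi_C\rho_{BC}),\rho_B]=0$ lets the Lieb integral $\int_0^\infty(\rho_B+\omega)^{-2}\,d\omega$ evaluate to $\rho_B^{-1}$. Your argument skips this Golden--Thompson--Lieb step entirely, and without it condition~(i) cannot be connected to the Gibbs/Klein bound.
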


\begin{remark} Let us make two remarks regarding the conditions of the theorem. Firstly, as expected, both conditions are automatically satisfied in case of $\phi=1_{ABC}$, i.e.\ the case of standard von Neumann entropy. Secondly, condition $(i)$ is the natural analog of the condition of the subadditivity property and as in the latter case is a physical condition which one expects not to be able to improve on. However, condition $(ii)$ is a technical condition which one might hope to improve. Further, note that an analogs condition with $A$ and $C$ interchanged is also a valid condition $(ii)$.
\end{remark}

\begin{proof}[Proof of Theorem \ref{thm:strong}]
To prove Theorem \ref{thm:strong} we have to show that $\mathcal{A}:=S_{\phi_{ABC}}(\rho_{ABC})\!+\!S_{\psi_{B}}(\rho_{B}) - S_{\psi_{AB}}(\rho_{AB}) \!  - \! S_{\psi_{BC}}(\rho_{BC}) \leq 0$. Since we have already proven the weighted Klein's inequality we can follow a similar strategy as in the original proof by Lieb and Ruskai \cite{LR1} of strong subadditivity of standard von Neumann entropy which uses the standard Klein's inequality as a first ingredient. To do so we first make the following observation, which follows from the specific definition of the reduced weights as given in Theorem \ref{thm:strong}; abbreviating $\rho\equiv \rho_{ABC}$ and $\phi\equiv \phi_{ABC}$, one gets
\bea
\mathcal{A}=\! \tr_{ABC} \! \left\{  \phi\rho \left( \log\rho_{AB} +\log\rho_{BC}-\log\rho_B -\log\rho  \right) \right\},
\eea
 where all matrices are to be understood as extended to the Hilbert space of the full system $ABC$, i.e. $\rho_{AB}$ is short for $\rho_{AB}\otimes 1_C$ and similarly for the others. Now we can apply the weighted Klein's inequality, Lemma \ref{Klein}, with $W=\phi$, $Y=\rho$ and $X=\exp(\log\rho_{AB}-\log\rho_B+\log\rho_{BC})$ and $f(x)=x\log x$ as in \eqref{Kleineq2}, yielding 
\bea \label{eqinter}
\mathcal{A}\leq  \tr_{ABC} \! \left\{  \phi \exp(\log\rho_{AB}\! -\! \log\rho_B\!+\!\log\rho_{BC})\! -\! \phi \rho \right\}
\eea 
This relation very much resembles the Golden-Thomson inequality \cite{GT1,GT2}
\beq
\tr \left( e^{X+Y}\right)\leq \tr \left(e^{X} e^{Y}\right)
\eeq
and in the proof of the strong subadditivity for standard von Neumann entropy and Lieb and Ruskai \cite{LR1} used a generalisation of the Golden-Thomson inequality derived in an earlier work by Lieb \cite{Lieb},
\beq \label{GTLieb}
\tr \left( e^{X+Y+Z} \right )\leq \tr \left( e^Z T_{\exp(-X)}(e^Y)  \right)
\eeq
where
\beq
T_{\exp(-X)}(e^Y) = \int_0^\infty  (e^{-X} + 1\omega)^{-1} e^Y  (e^{-X} + 1\omega)^{-1} d\omega.
\eeq 
This relation can be extended to the weighted case, i.e. for $W$ being a weight one has
\beq \label{newineq}
\tr \left(W e^{X+Y+Z} \right )\leq \tr \left(K_W(Z)\, T_{\exp(-X)}(e^Y)  \right).
\eeq 
with 
\beq
K_W (Z) = \sum_{n=0}^\infty \frac{1}{(n+1)!} \sum_{l=0}^n Z^{n-l} W Z^l  
\eeq
and $T_{\exp(-X)}(e^Y) $ as defined above.
The proof is a generalisation of the proof of Theorem 7 of \cite{Lieb}. Set $\xi=e^{-X}$, $\eta=e^Y$ and $R=X+Z$. Note that $\xi$ and $\eta$ are strictly positive operators. We define a function $F$ from the cone of strictly positive operators to the real numbers through $F: \xi\to-\tr[ W \exp(R+\log\xi)]$. Since $F$ is convex and homogeneous of order one, Lemma 5 of \cite{Lieb} can be applied, yielding
\beq
-\tr \left(W e^{X+Y+Z} \right ) = F(\eta)\geq \frac{d}{d\omega} \left[F(\xi+\omega \eta)\right]_{\omega=0}.
\eeq
Taylor expanding and taking the derivative gives \eqref{newineq}.

We can now apply \eqref{newineq} to the first term on the right-hand-side of \eqref{eqinter}. Choosing $X \!=\!-\! \log\rho_B$, $Y\!=\! \log\rho_{BC}$, $Z=\!\log\rho_{AB}$ and $W=\phi$ and furthermore assuming under condition $(ii)$ that we have the commutation relations $[\rho_{AB},\phi_A\otimes \phi_B]=0$ and $[\tr_C(\phi_C \rho_{BC}),\rho_B]=0$, then the first term on the right-hand-side of \eqref{eqinter} is bounded from above by
 \bea
&& \tr_{B} \Big\{  \phi_B\,  \tr_A(\phi_A \rho_{AB})  \tr_C(\phi_C \rho_{BC}) 
\int_0^\infty (\rho_B + 1\omega)^{-1} (\rho_B  + 1\omega)^{-1} d\omega \Big\} \nn\\
&&=\tr_{B} \Big\{  \phi_B\,  \tr_A(\phi_A \rho_{AB})  \tr_C(\phi_C \rho_{BC}) \rho_B^{-1}  \Big\}
 \eea
Thus one arrives at
\beq
\mathcal{A}  \leq  \tr_{B} \Big\{  \phi_B\,  \tr_A(\phi_A \rho_{AB})  \tr_C(\phi_C \rho_{BC}) \rho_B^{-1}  \Big\} -    \tr_{ABC}  \phi  \rho 
\eeq
which by condition $(i)$ of the theorem yields
\beq
\mathcal{A}  \leq 0.
\eeq
This completes the proof.
\end{proof}

\section{Discussion}

We introduce quantum weighted entropy and derived several useful properties in terms of various trace inequalities. Each of those inequalities contains the corresponding result for von Neumann entropy as a special case when the weight is chose to be the identity matrix. In particular, besides basic properties, we derived a diagonalisation bound (Theorem \ref{thm:diag}), subadditivity and concavity of quantum weighted entropy (Theorem \ref{thm:sub} and \ref{ThmConcavity}), an analog of the Araki-Lieb inequality (Theorem \ref{thm:AL}) and strong subadditivity of quantum weighted entropy (Theorem \ref{thm:strong}). An essential ingredient to the previous results is an analog of Gibbs inequality for quantum weighted relative entropy (Theorem \ref{Gibbs}) which in turn is obtained from a weighted Klein's inequality (Lemma \ref{Klein}).  

A difficulty in proving the above trace inequalities, in comparison to the analogous results for von Neumann entropy, is the fact that in general the weights do not commute with the (reduced) density matrices. In the case of the weighted Klein's inequality we circumvent this problem by utilising the unique decompositions $W=L L^\dag$ of the weight. Since the weighted Gibbs inequality and in turn most of the other inequalities are derived from the weighted Klein's inequality, they inherited this property and can be proven without any further assumptions on commutation relations of the weight. The only result, where commutativity of parts of the weight with some of the reduced density matrices is assumed, is strong subadditivity. The proof is thus not optimal and one would hope to be able to improve it by relaxing those conditions. In this context we note that our proof of strong subadditivity of quantum weighted entropy closely follows the original proof by Lieb and Ruskai for strong subadditivity of von Neumann entropy \cite{LR1}. This enables one to use the weighted Gibbs inequality in an essential manner. In the case of alternative, more modern strategies for proving strong subadditivity, as in \cite{NP}, the situation is more involved. 

The here presented discussion of quantum weighted entropy is a first account deriving many of its properties and thus forms a basis for further interesting potential applications of the latter in quantum information theory. 
\\

{\emph{Acknowledgements --}}
YS and SZ thank Salimeh Yasaei Sekeh for useful discussions.
YS thanks University of Sao Paulo (at Sao Paulo and at Sao Carlos) for the hospitality during the academic year 2013-4.
SZ acknowledges support by CNPq (Grant 307700/2012-7) and PUC-Rio, as well as thanks USP for kind hospitality.

\appendix
\section{Proof of weighted Klein's inequality}
In case the matrices $X$ and $W$ commute, one can simultaneously diagonalise them, which enables one to follow the same steps as in the proof for the standard Klein's inequality. The general case, where $X$ and $W$ no \emph{not} commute, is slightly more involved and relies on the following decomposition: Since $W$ is positive definite one has that $\langle v| W | v \rangle\geq0$ for any $|v\rangle$ and there exists a unique $L$ such that $W=L L^\dag$. Let now $|e_1\rangle,|e_2\rangle,\ldots$, be the normalised eigenvectors of $X$ and $\la_1,\la_2,\ldots$ the corresponding eigenvalues.  Furthermore, we define the \emph{normalised} vectors $| \tilde{e}_j \rangle=  L | e_j \rangle / \sqrt{\langle e_j| W | e_j \rangle}$. Then
\bea
\!\!\!\!\! \!\!\!\!\!  \!\!\!\!\! && \tr \left( W(f(Y) - f(X))\right) = \nonumber \\
&&=\sum_j \langle e_j| W | e_j \rangle \left\{ \langle \tilde{e}_j| f(Y)  | \tilde{e}_j \rangle -f(\la_j)   \right\} 
\eea
Now we use that for any unit vector $|v \rangle \in{\cal H}$, by convexity of $f$, 
\beq
\langle v |f( Y)| v\rangle \geq {f\big(\langle v | Y| v \rangle\big)}. 
\eeq
Also, $f(y)-f(x)\geq (y-x)f'(x)$ for $x,y\in{\mathbb R}$. Thus
\bea 
\!\!\!\!\! \!\!\!\!\!  \!\!\!\!\!  \!\!\!\!\! && \tr \left(W(f(Y) - f(X))\right)  \nonumber\\
&& \geq \sum_j \langle e_j| W | e_j \rangle \left\{ f\left( \langle \tilde{e}_j | Y | \tilde{e}_j \rangle\right) -f(\la_j)   \right\} \nonumber\\
&&  \geq \sum_j \langle e_j| W | e_j \rangle \left\{ \langle \tilde{e}_j | Y | \tilde{e}_j \rangle - \la_j   \right]  f'(\la_j) \nonumber\\
&&= \tr \left( W(Y -X) f'(X) \right), 
\eea
which completes the proof.

 \section*{References}

\end{document}